\documentclass[12pt,fleqn]{article}
\usepackage[dvips]{epsfig}
\usepackage{amsmath,amssymb,amsfonts}
\usepackage{tikz}
\usepackage{tikz,tikz-cd}
\usepackage{hyperref} 


\numberwithin{equation}{subsection}
\numberwithin{figure}{subsection}

\newtheorem{theorem}{Theorem}[subsection]

\newtheorem{remark}[theorem]{Remark}

\newtheorem{proposition}[theorem]{Proposition}

\def\pr{\operatorname{pr}}%
\def\spec{\operatorname{Spec}}%
\def\tr{\operatorname{tr}}%

\newenvironment{proof}[1][Proof:]{\begin{trivlist}
\item[\hskip \labelsep {\bfseries #1}]}{\end{trivlist}}

\newcommand{\qed}{\nobreak \ifvmode \relax \else
      \ifdim\lastskip<1.5em \hskip-\lastskip
      \hskip1.5em plus0em minus0.5em \fi \nobreak
      \vrule height0.75em width0.5em depth0.25em\fi}

 0 3  
\usepackage{hyperref} 

\date{}
\begin{document}

\title{On the Protection Against Noise for Measurement-Based Quantum Computation}

\author{Valentin Vankov Iliev\\
Institute of Mathematics and Informatics\\
Bulgarian Academy of Sciences\\
Sofia, Bulgaria\\
e-mail: viliev@math.bas.bg\\
 }

\maketitle

\begin{quote}

"In principio erat verbum,..."

\emph{Ioann 1:1}

"It is operationally impossible to separate reality and
information."

\emph{Anton Zeilinger}

\end{quote}

\begin{abstract}
Here we establish conditions for some pairs of quantum logic gates
which operate on one qubit to be protected against crosstalk.

\end{abstract}

\section{Introduction, Notation}

\label{1}

\subsection{Notation}

\label{1.1}

The following notation will be frequently used in this paper:

\noindent $\delta_{k,\ell}$: Kronecker's delta;

\noindent $\mathbb{Z}_2=\{0,1\}$: the additive group of two
elements;

\noindent $\mathcal{H}$: $2$-dimensional unitary space with inner
product $\langle u|v\rangle$ of vectors $|u\rangle$ and
$|v\rangle$;

\noindent $\{|0\rangle,|1\rangle\}$: orthonormal frame for
$\mathcal{H}$, sometimes called the computational basis;

\noindent $\mathbb{I}_{\mathcal{H}}$: the identity linear operator
on $\mathcal{H}$;

\noindent $\spec(A)$: the spectre of a linear operator $A$ on
$\mathcal{H}$;
\[
\!\sigma_1=\left(
\begin{array}{ccccc}
0 &1\\
1 & 0 \\
\end{array}
\right), \sigma_2=\left(
\begin{array}{ccccc}
0 &-\mathrm{i}\\
\mathrm{i} & 0 \\
\end{array}
\right), \sigma_3=\left(
\begin{array}{ccccc}
1 &0 \\
0 & -1 \\
 \end{array}
\right), H=\frac{1}{\sqrt{2}}\left(
\begin{array}{ccccc}
1 &1 \\
1 & -1 \\
 \end{array}
\right):
\]
Pauli matrices and Hadamard matrix, respectively;

\noindent $\mathcal{E}$: the $3$-dimensional real linear space of
all self-adjoined operators on $\mathcal{H}$ with trace $0$, which
is furnished with orthonormal coordinates $x$, $y$, $z$;

\noindent $\mathbb{R}_+(A)=\{[(cx,cy,cz)|c\in\mathbb{R},c\geq
0\}$: direction of the operator $A\in\mathcal{E}$ with coordinates
$(x,y,z)$;

\noindent $\mathcal{H}^{\otimes 2}$: the tensor square of
$\mathcal{H}$ provided with the standard structure of a unitary
space, the tensor product of vectors $|u\rangle$ and $|v\rangle$
being denoted $|uv\rangle$;

\noindent $\mathcal{U}^{\left(2\right)}$: the unit sphere of
$\mathcal{H}^{\otimes 2}$;

\noindent $\psi_{s,t}=
\frac{1}{\sqrt{2}}(|0t\rangle+(-1)^s|1(t+1)\rangle)$,
$\psi_{s,t}\in\mathcal{U}^{\left(2\right)}$, $s,t\in\mathbb{Z}_2$:
the four Bell's states;

\noindent We denote: $\tr_0\alpha=\cos\alpha$,
$\tr_1\alpha=\sin\alpha$, $\alpha\in\mathbb{R}$.

\noindent We denote:
$P(\alpha,\beta)=\tr_0\alpha\tr_0\beta\tr_1\alpha\tr_1\beta$,
$\alpha,\beta\in\mathbb{R}$.

\subsection{Introduction}

\label{1.5}

Some of the logic gates in a quantum computational network can be
represented by self-adjoined operators $A$, $B$, $\ldots$ with
spectre $\{1,-1\}$ on the unitary plane $\mathcal{H}$. Their set
coincides with unit sphere of the linear space $\mathcal{E}$.
After appropriate tensoring with the identity operator, any pair
$A$, $B$ of such logic operations generates a pair $\mathcal{A}$
and $\mathcal{B}$ of self-adjoined commuting operators with the
same spectre, which are defined on the bipartite quantum system
$\mathcal{H}^{\otimes 2}$. In particular, the measurements of the
observables corresponding to these operators can be considered as
simultaneous. In other words, we consider as an event "the outcome
of measuring $\mathcal{A}$ is $\lambda_k$ \emph{and} the outcome
of measuring $\mathcal{B}$ is $\lambda_\ell$ ", where $\lambda_k,
\lambda_\ell\in\{1,-1\}$. We can suppose without loss of
generality that $k,\ell\in\mathbb{Z}_2$, and that $\lambda_0=1$,
$\lambda_1=-1$. Born's rule allows us to interpret the above
conjunction as an intersection of two events in a sample space and
this is done in Section~\ref{5}. Namely, the tensor product of
orthonormal frames of the unitary plane, consisting of
eigenvectors of $A$ and $B$ constitutes an orthonormal frame of
$\mathcal{H}^{\otimes 2}$ whose members are common eigenvectors of
both $\mathcal{A}$ and $\mathcal{B}$. We chose this frame as the
set of outcomes of a sample space
$S(\psi;\mathcal{A},\mathcal{B})$ with probability assignment
$p_{k,\ell}$ created by quantum theory via fixing a state
$\psi\in\mathcal{H}^{\otimes 2}$ and via Born's rule. It turns out
that $p_{k,\ell}$ is the probability of the intersection of the
events $\mathcal{A}=\lambda_k$ and $\mathcal{B}=\lambda_\ell$ in
this sample space. In order to find the probability $p_{k,\ell}$,
we identify the self-adjoined operators with their matrices with
respect to the computational basis for $\mathcal{H}$ and express
the matrices $A$ and $B$ by using polar coordinates,
$A=A_{\mu,\eta}$ and $B=A_{\nu,\zeta}$, see Theorem~\ref{5.1.5}.
Here $\mu$, $\nu$ are polar angles and $\eta$, $\zeta$ are
azimuthal angles.

Under the condition $\psi=\psi_{s,t}$ for some
$s,t\in\mathbb{Z}_2$, (one of the four Bell's states)
Theorem~\ref{5.5.1} yields that the probabilities of all events
$\mathcal{A}=\lambda_k$ and $\mathcal{B}=\lambda_\ell$,
$k,\ell\in\mathbb{Z}_2$, are equal and that the quantity
$p_{k,\ell}$ is invariant with respect to the natural action of
the group $\mathbb{Z}_2$ onto the Klein four-group
$\mathbb{Z}_2\times \mathbb{Z}_2$. In particular, the probability
assignment of the sample space
$S(\psi_{s,t};\mathcal{A},\mathcal{B})$ can be written in the form
$(p_{k,k},p_{k,k+1}, p_{k,k+1}, p_{k,k})$ for $k\in\mathbb{Z}_2$.

In Section~\ref{10} we consider the two binary trials
$\mathfrak{A}=(\mathcal{A}=1)\cup(\mathcal{A}=-1)$,
$\mathfrak{B}=(\mathcal{B}=1)\cup(\mathcal{B}=-1)$ and make use of
the average quantity of information of one of the experiments
$\mathfrak{A}$ and $\mathfrak{B}$ relative to the other (the
\emph{information flow}, or, \emph{noise}, or,  \emph{crosstalk}
between them), see~\cite[\S 1]{[1]}, defined in this particular
case by the Shannon's formula~\cite[5.3, (6)]{[5]}.

In accord with~\cite{[5]}, the joint experiment of the above
binary trials gives rise to the same probability distribution
$(p_{k,k},p_{k,k+1}, p_{k,k+1}, p_{k,k})$, $k\in\mathbb{Z}_2$. Via
modification of its entropy, we bring out a function which
measures the degree of dependence of the events
$\mathcal{A}=\lambda_k$ and $\mathcal{B}=\lambda_\ell$. The above
events are independent (the entropy is maximal) if and only if the
crosstalk between $\mathfrak{A}$ and $\mathfrak{B}$ is zero. In
this case we also say that the measurements of the observables
corresponding to $\mathcal{A}$ and $\mathcal{B}$ are
\emph{informationally independent}. Thus,
following~\cite[5.2]{[5]}, we conclude that the above events are
informationally independent exactly when the equation
$p_{k,k}=\frac{1}{4}$ is satisfied.

Section~\ref{15} is devoted to three particular cases when the
expression for $p_{k,k}$ from Theorem~\ref{5.5.1}, {\rm (ii)},
{\rm (iii)}, has very simple form and the above equation can be
solved explicitly in terms of sums or differences of polar or
azimuthal angles. Those cases are defined by the property that
both operators $A$, $B$ have simultaneously directions
$\mathbb{R}_+(A)$, $\mathbb{R}_+(B)$ laying in one of the three
coordinate planes $x=0$, $y=0$, $z=0$ of $\mathcal{E}$. In
Subsection~\ref{15.1} we consider the set of operators with
direction in coordinate plane $x=0$ (a circle on the unit sphere
of $\mathcal{E}$). Pauli matrices $\sigma_2$ and $\sigma_3$ belong
to this set. The results of measurements performed on the
observables $\mathcal{A}_{\mu,\eta}$ and $\mathcal{B}_{\nu,\zeta}$
are informationally independent if and only if
$\mu+\nu=\frac{\pi}{2}$ or $\mu+\nu=\frac{3\pi}{2}$ in case
$\psi=\frac{1}{\sqrt{2}}(|0t\rangle+|1(t+1)\rangle)$ for some
$t\in\mathbb{Z}_2$, and if and only if $|\mu-\nu|=\frac{\pi}{2}$
in case $\psi=\frac{1}{\sqrt{2}}(|0t\rangle-|1(t+1)\rangle)$ for
some $t\in\mathbb{Z}_2$.

We study the set of operators with direction in coordinate plane
$y=0$ in Subsection~\ref{15.5}. Pauli matrices $\sigma_1$ and
$\sigma_3$, as well as Hadamard matrix $H$, are members of this
set. It turns out that the outcomes of measurement of observables
$\mathcal{A}_{\mu,\eta}$ and $\mathcal{B}_{\nu,\zeta}$ are
informationally independent exactly in case
$\mu+\nu=\frac{\pi}{2}$ or $\mu+\nu=\frac{3\pi}{2}$, when
$\psi=\frac{1}{\sqrt{2}}(|0(s+1)\rangle+(-1)^s|1s\rangle)$ for
some $s\in\mathbb{Z}_2$, and in case $|\mu-\nu|=\frac{\pi}{2}$,
when $\psi=\frac{1}{\sqrt{2}}(|0s\rangle+(-1)^s|1(s+1)\rangle)$
for some $s\in\mathbb{Z}_2$.

Finally, in Subsection~\ref{15.10} we examine the set of operators
with direction in coordinate plane $z=0$. Pauli operators
$\sigma_1$ and $\sigma_2$ belong there. In this case the
measurements performed by the observables $\mathcal{A}_{\mu,\eta}$
and $\mathcal{B}_{\nu,\zeta}$ are informationally independent
exactly in case
$\eta+\zeta\in\{\frac{\pi}{2},\frac{3\pi}{2},\frac{5\pi}{2},\frac{7\pi}{2}$\},
when $\psi=\frac{1}{\sqrt{2}}(|0,0\rangle+(-1)^s|1,1\rangle)$ for
some $s\in\mathbb{Z}_2$, and in case $|\eta-\zeta|=\frac{\pi}{2}$
or $|\eta-\zeta|=\frac{3\pi}{2}$, when
$\psi=\frac{1}{\sqrt{2}}(|0,1\rangle+(-1)^s|1,0\rangle)$ for some
$s\in\mathbb{Z}_2$.

When all was said and done, we hope that the above conditions for
absence of crosstalk between two interacting logic gates can be
checked experimentally.

\section{Two Commuting Operators on $\mathcal{H}^{\otimes 2}$}

\label{5}

The self-adjoined operators on $\mathcal{H}$ with spectre
$\{1,-1\}$ have, in general, the form
\[
A_{\mu,\eta}=\left(
\begin{array}{ccccc}
\cos\mu &e^{-\mathrm{i}\eta}\sin\mu \\
e^{\mathrm{i}\eta}\sin\mu & -\cos\mu \\
 \end{array}
\right),
\]
where $\mu\in [0,\pi]$ is the polar angle and $\eta\in [0,2\pi)$
is the azimuthal angle. The polar coordinates $x=\sin\mu\cos\eta$,
$y=\sin\mu\sin\eta$, $z=\cos\mu$, establish an isomorphism of the
set of above operators and the unit sphere in $\mathcal{E}$.

We denote $B_{\nu,\zeta}=A_{\nu,\zeta}$, $\lambda_0=1$,
$\lambda_1=-1$. Corresponding (normalized) eigenvectors are
\[
u_{\mu,\eta}^{\left(k\right)}=
(-1)^ke^{-\mathrm{i}\eta}\tr_k\frac{\mu}{2}|0\rangle+\tr_{k+1}\frac{\mu}{2}|1\rangle,
k\in\mathbb{Z}_2.
\]
Moreover, $H_{\mu,\eta}^{\left(k\right)}=\mathbb{C}
u_{\mu,\eta}^{\left(k\right)}$ is its $\lambda_k$-eigenspace. Note
that
\[
 \{|00\rangle, |01\rangle, |10\rangle, |11\rangle\} \hbox{\rm\
and\ }u_{\mu,\eta}u_{\nu,\zeta} = \{|u_{\mu,\eta}^{\left(0\right)}
u_{\nu,\zeta}^{\left(0\right)}\rangle,
|u_{\mu,\eta}^{\left(0\right)}
u_{\nu,\zeta}^{\left(1\right)}\rangle,
|u_{\mu,\eta}^{\left(1\right)}
u_{\nu,\zeta}^{\left(0\right)}\rangle,
|u_{\mu,\eta}^{\left(1\right)}
u_{\nu,\zeta}^{\left(1\right)}\rangle\}
\]
are orthonormal frames for $\mathcal{H}^{\otimes 2}$.

Let us set
$\mathcal{A}_{\mu,\eta}=A_{\mu,\eta}\otimes\mathbb{I}_{\mathcal{H}}$,
$\mathcal{B}_{\nu,\zeta}=\mathbb{I}_{\mathcal{H}}\otimes
B_{\nu,\zeta}$. It is a straightforward check that the last two
linear operators on $\mathcal{H}^{\otimes 2}$ are also
self-adjoined with spectre $\{1,-1\}$. Moreover, the
$\lambda_k$-eigenspace
$\mathcal{H}_{\mathcal{A}_{\mu,\eta}}^{\left(k\right)}=
H_{\mu,\eta}^{\left(k\right)}\otimes\mathcal{H}$ of the operator
$\mathcal{A}_{\mu,\eta}$ has orthonormal frame
$\{|u_{\mu,\eta}^{\left(k\right)}
u_{\nu,\zeta}^{\left(0\right)}\rangle,|u_{\mu,\eta}^{\left(k\right)}
u_{\nu,\zeta}^{\left(1\right)}\rangle\}$, and the
$\lambda_\ell$-eigenspace
$\mathcal{H}_{\mathcal{B}_{\nu,\zeta}}^{\left(\ell\right)}=
\mathcal{H}\otimes H_{\nu,\zeta}^{\left(\ell\right)}$ of the
operator $\mathcal{B}_{\nu,\zeta}$ has orthonormal frame
$\{|u_{\mu,\eta}^{\left(0\right)}
u_{\nu,\zeta}^{\left(\ell\right)}\rangle,
|u_{\mu,\eta}^{\left(1\right)}
u_{\nu,\zeta}^{\left(\ell\right)}\rangle\}$,
$k,\ell\in\mathbb{Z}_2$.

Since $u_{\mu,\eta}u_{\nu,\zeta}$ is an orthonornal frame of
$\mathcal{H}^{\otimes 2}$ consisting of eigenvectors of both
$\mathcal{A}_{\mu,\eta}$ and $\mathcal{B}_{\nu,\zeta}$, then the
last two operators commute. In other words, the outcomes of
measurements of these observables can be thought as simultaneous
(the order is irrelevant) and the quantum theory predicts the
probabilities of both outcomes by employing the frame
$u_{\mu,\eta}u_{\nu,\zeta}$. Formally, this is done below.

\subsection{A Sample Space and Two Random Variables}

\label{5.1}

Let $\psi\in\mathcal{U}^{\left(2\right)}$ and let
$S(\psi;\mathcal{A},\mathcal{B})$ be the sample space with set of
outcomes $u_{\mu,\eta}u_{\nu,\zeta}$ and probability assignment
$p_{k,\ell}=|\langle|u_{\mu,\eta}^{\left(k\right)}
u_{\nu,\zeta}^{\left(\ell\right)}\rangle|\psi\rangle|^2$,
$k,\ell\in\mathbb{Z}_2$.

With an abuse of the language, we consider the observable
$\mathcal{A}_{\mu,\eta}$ as a random variable
\[
\mathcal{A}_{\mu,\eta}\colon
u_{\mu,\eta}u_{\nu,\zeta}\to\mathbb{R},
\mathcal{A}_{\mu,\eta}(|u_{\mu,\eta}^{\left(0\right)}
u_{\nu,\zeta}^{\left(\ell\right)}\rangle)=1,
\mathcal{A}_{\mu,\eta}(|u_{\mu,\eta}^{\left(1\right)}
u_{\nu,\zeta}^{\left(\ell\right)}\rangle)=-1, \ell\in\mathbb{Z}_2,
\]
on the sample space $S(\psi;\mathcal{A},\mathcal{B})$ with
probability distribution
\[
p_{\mathcal{A}_{\mu,\eta}}(\lambda_k)=p_{k,0}+p_{k,1},
k\in\mathbb{Z}_2, p_{\mathcal{A}_{\mu,\eta}}(\lambda)=0,
\lambda\notin\spec(\mathcal{A}_{\mu,\eta}).
\]
Identifying the event $\{|u_{\mu,\eta}^{\left(k\right)}
u_{\nu,\zeta}^{\left(0\right)}\rangle,|u_{\mu,\eta}^{\left(k\right)}
u_{\nu,\zeta}^{\left(1\right)}\rangle\}$ with the "event"
$\mathcal{A}_{\mu,\eta}=\lambda_k$ (the result of the
measurement), we have
$\pr(\mathcal{A}_{\mu,\eta}=\lambda_k)=p_{k,0}+p_{k,1}$,
$k\in\mathbb{Z}_2$.

We also consider the observable $\mathcal{B}_{\nu,\zeta}$ as a
random variable
\[
\mathcal{B}_{\nu,\zeta}\colon
u_{\mu,\eta}u_{\nu,\zeta}\to\mathbb{R},
\mathcal{B}_{\nu,\zeta}(|u_{\mu,\eta}^{\left(k\right)}
u_{\nu,\zeta}^{\left(0\right)}\rangle)=1,
\mathcal{B}_{\nu,\zeta}(|u_{\mu,\eta}^{\left(k\right)}
u_{\nu,\zeta}^{\left(1\right)}\rangle)=-1, k\in\mathbb{Z}_2,
\]
on the above sample space with probability distribution
\[
p_{\mathcal{B}_{\nu,\zeta}}(\lambda_\ell)=p_{0,\ell}+p_{1,\ell},
\ell\in\mathbb{Z}_2, p_{\mathcal{B}_{\nu,\zeta}}(\lambda)=0,
\lambda\notin\spec(\mathcal{B}_{\nu,\zeta}).
\]
Identifying the event $\{|u_{\mu,\eta}^{\left(0\right)}
u_{\nu,\zeta}^{\left(\ell\right)}\rangle,|u_{\mu,\eta}^{\left(1\right)}
u_{\nu,\zeta}^{\left(\ell\right)}\rangle\}$ with the "event"
$\mathcal{B}_{\nu,\zeta}=\lambda_\ell$, we have
$\pr(\mathcal{B}_{\nu,\zeta}=\lambda_\ell)=p_{0,\ell}+p_{1,\ell}$,
$\ell\in\mathbb{Z}_2$. Moreover, the equality
$(\mathcal{A}_{\mu,\eta}=\lambda_k)\cap(\mathcal{B}_{\nu,\zeta}=\lambda_\ell)=
\{|u_{\mu,\eta}^{\left(k\right)}
u_{\nu,\zeta}^{\left(\ell\right)}\rangle\}$ yields
$\pr((\mathcal{A}_{\mu,\eta}=\lambda_k)\cap(\mathcal{B}_{\nu,\zeta}=\lambda_\ell))=
p_{k,\ell}$,  $k,\ell\in\mathbb{Z}_2$.

We obtain immediately

\begin{proposition} \label{5.1.1} The following two statements are
equivalent:

({\rm i}) One has $p_{0,0}=p_{1,1}$ and $p_{0,1}=p_{1,0}$.

({\rm ii}) For all $k,\ell\in\mathbb{Z}_2$ one has
$\pr(\mathcal{A}_{\mu,\eta}=\lambda_k)=\pr(\mathcal{B}_{\nu,\zeta}=\lambda_\ell)$.

\end{proposition}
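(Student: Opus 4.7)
The plan is a short direct argument using the marginal‑probability formulas already established in Subsection~\ref{5.1} together with the normalization identity
\[
p_{0,0}+p_{0,1}+p_{1,0}+p_{1,1}=1,
\]
which holds because the four vectors in $u_{\mu,\eta}u_{\nu,\zeta}$ form an orthonormal frame of $\mathcal{H}^{\otimes 2}$ and $\psi\in\mathcal{U}^{\left(2\right)}$. The key identities to keep in hand are $\pr(\mathcal{A}_{\mu,\eta}=\lambda_k)=p_{k,0}+p_{k,1}$ and $\pr(\mathcal{B}_{\nu,\zeta}=\lambda_\ell)=p_{0,\ell}+p_{1,\ell}$.

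For the implication (i)$\Rightarrow$(ii), I would simply substitute $p_{1,1}=p_{0,0}$ and $p_{1,0}=p_{0,1}$ into the marginal formulas. This immediately yields $\pr(\mathcal{A}_{\mu,\eta}=\lambda_0)=\pr(\mathcal{A}_{\mu,\eta}=\lambda_1)$ and $\pr(\mathcal{B}_{\nu,\zeta}=\lambda_0)=\pr(\mathcal{B}_{\nu,\zeta}=\lambda_1)$, each pair of which sums to $1$; hence all four marginals equal $\tfrac{1}{2}$, proving (ii).

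For (ii)$\Rightarrow$(i), the common value of the four marginals must again be $\tfrac{1}{2}$, since $\pr(\mathcal{A}_{\mu,\eta}=\lambda_0)+\pr(\mathcal{A}_{\mu,\eta}=\lambda_1)=1$. Writing this out gives the four linear equations
\[
p_{0,0}+p_{0,1}=p_{1,0}+p_{1,1}=p_{0,0}+p_{1,0}=p_{0,1}+p_{1,1}=\tfrac{1}{2}.
\]
Subtracting the first from the third yields $p_{0,1}=p_{1,0}$, and subtracting the first from the fourth yields $p_{0,0}=p_{1,1}$, which is exactly (i).

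The argument is purely combinatorial once the marginal formulas are recognized, so there is no real obstacle; the only point requiring a moment's care is noting that equality of marginals forces each marginal to be $\tfrac{1}{2}$, which uses the normalization of the probability distribution on $S(\psi;\mathcal{A},\mathcal{B})$.
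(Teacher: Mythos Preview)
Your argument is correct and is precisely the elementary computation the paper has in mind when it writes ``We obtain immediately'' before the proposition: you just unwind the marginal formulas $\pr(\mathcal{A}_{\mu,\eta}=\lambda_k)=p_{k,0}+p_{k,1}$, $\pr(\mathcal{B}_{\nu,\zeta}=\lambda_\ell)=p_{0,\ell}+p_{1,\ell}$ together with the normalization $\sum p_{k,\ell}=1$. There is nothing to add.
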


\begin{theorem}\label{5.1.5} Let $\psi=\psi_{s,t}$ where $s,t\in\mathbb{Z}_2$.
Then for all $k,\ell\in\mathbb{Z}_2$ one has
\[
p_{k,\ell}=\frac{1}{2}|
(-1)^{k+\ell}e^{-\mathrm{i}\left(\eta+\zeta\right)}\tr_k\frac{\mu}{2}
\tr_\ell\frac{\nu}{2}\delta_{0,t}+
(-1)^ke^{-\mathrm{i}\eta}\tr_k\frac{\mu}{2}\tr_{\ell+1}\frac{\nu}{2}\delta_{1,t}+
\]
\[
(-1)^{s+\ell}e^{-\mathrm{i}\zeta}\tr_{k+1}\frac{\mu}{2}
\tr_\ell\frac{\nu}{2}\delta_{0,t+1}+
(-1)^s\tr_{k+1}\frac{\mu}{2}\tr_{\ell+1}\frac{\nu}{2}\delta_{1,t+1}|^2.
\]

\end{theorem}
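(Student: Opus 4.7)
The plan is direct computation: evaluate the inner product $\langle u_{\mu,\eta}^{(k)} u_{\nu,\zeta}^{(\ell)}|\psi_{s,t}\rangle$ explicitly in the computational basis of $\mathcal{H}^{\otimes 2}$ and then take the squared modulus. The two ingredients are already assembled in the preceding material: the formula for $u_{\mu,\eta}^{(k)}$ as a linear combination of $|0\rangle$ and $|1\rangle$, and the definition of the Bell state $\psi_{s,t}$.

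First I would expand the tensor product $|u_{\mu,\eta}^{(k)} u_{\nu,\zeta}^{(\ell)}\rangle$ into the four basis vectors $|00\rangle,|01\rangle,|10\rangle,|11\rangle$ by distributing. The resulting coefficients are, respectively,
\[
(-1)^{k+\ell}e^{-\mathrm{i}(\eta+\zeta)}\tr_k\tfrac{\mu}{2}\tr_\ell\tfrac{\nu}{2},\quad (-1)^{k}e^{-\mathrm{i}\eta}\tr_k\tfrac{\mu}{2}\tr_{\ell+1}\tfrac{\nu}{2},
\]
\[
(-1)^{\ell}e^{-\mathrm{i}\zeta}\tr_{k+1}\tfrac{\mu}{2}\tr_\ell\tfrac{\nu}{2},\quad \tr_{k+1}\tfrac{\mu}{2}\tr_{\ell+1}\tfrac{\nu}{2}.
\]
Taking the complex conjugate to form the bra only changes the signs of the exponents, which is invisible after $|\cdot|^2$; this is why the theorem can equally be stated with $e^{-\mathrm{i}(\cdot)}$.

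Next I would rewrite $\psi_{s,t}=\frac{1}{\sqrt{2}}(|0t\rangle+(-1)^{s}|1(t+1)\rangle)$ as a single expression in the computational basis valid for both $t=0$ and $t=1$. The trick is to encode the index $t\in\mathbb{Z}_2$ with Kronecker deltas: the coefficients in front of $|00\rangle,|01\rangle,|10\rangle,|11\rangle$ become, up to the factor $\frac{1}{\sqrt{2}}$,
\[
\delta_{0,t},\quad \delta_{1,t},\quad (-1)^{s}\delta_{0,t+1},\quad (-1)^{s}\delta_{1,t+1},
\]
respectively (since $|1(t+1)\rangle=|10\rangle$ exactly when $t+1=0$, and $=|11\rangle$ exactly when $t+1=1$). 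The inner product is now just the sum of the four component-wise products, and multiplying things out yields precisely the four summands in the statement, with the prefactor $\frac{1}{\sqrt{2}}$ producing the $\frac{1}{2}$ after squaring.

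The computation itself is routine; the only place where one has to be careful is the bookkeeping of the three independent parity signs $(-1)^{k},(-1)^{\ell},(-1)^{s}$ and how $s$ combines with $\ell$ in the third summand (the $|10\rangle$ contribution). This sign collation, together with the correct pairing of $\delta_{0,t},\delta_{1,t},\delta_{0,t+1},\delta_{1,t+1}$ with the appropriate computational basis vectors, is the main source of potential error and the step I would double-check. Everything else — linearity of the inner product, antilinearity in the first slot, and the fact that $\tr_k\frac{\mu}{2}$ is real — is mechanical.
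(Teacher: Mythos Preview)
Your proposal is correct and follows essentially the same route as the paper: a direct computation of $\langle u_{\mu,\eta}^{(k)} u_{\nu,\zeta}^{(\ell)}|\psi_{s,t}\rangle$ in the computational basis, with Kronecker deltas used to handle both values of $t$ uniformly. The only cosmetic difference is that the paper first factors the inner product as $\langle u_{\mu,\eta}^{(k)}|0\rangle\langle u_{\nu,\zeta}^{(\ell)}|t\rangle+(-1)^s\langle u_{\mu,\eta}^{(k)}|1\rangle\langle u_{\nu,\zeta}^{(\ell)}|t+1\rangle$ before expanding, whereas you expand the full tensor product first; the bookkeeping and the result are identical.
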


\begin{proof}  We have
\[
p_{k,\ell}=|\langle u_{\mu,\eta}^{\left(k\right)}
u_{\nu,\zeta}^{\left(\ell\right)}\rangle|\psi_{s,t}\rangle|^2=
\frac{1}{2}|\langle u_{\mu,\eta}^{\left(k\right)}|0\rangle \langle
u_{\nu,\zeta}^{\left(\ell\right)}|t\rangle+ (-1)^s\langle
u_{\mu,\eta}^{\left(k\right)}|1\rangle \langle
u_{\nu,\zeta}^{\left(\ell\right)}|t+1\rangle|^2=
\]
\[
\frac{1}{2}|(
(-1)^ke^{-\mathrm{i}\eta}\tr_k\frac{\mu}{2}\delta_{0,0}+\tr_{k+1}\frac{\mu}{2}\delta_{1,0})
((-1)^\ell
e^{-\mathrm{i}\zeta}\tr_\ell\frac{\nu}{2}\delta_{0,t}+\tr_{\ell+1}\frac{\nu}{2}\delta_{1,t})+
\]
\[
(-1)^s(
(-1)^ke^{-\mathrm{i}\eta}\tr_k\frac{\mu}{2}\delta_{0,1}+\tr_{k+1}\frac{\mu}{2}\delta_{1,1})
((-1)^\ell
e^{-\mathrm{i}\zeta}\tr_\ell\frac{\nu}{2}\delta_{0,t+1}+\tr_{\ell+1}\frac{\nu}{2}\delta_{1,t+1})|^2=
\]
\[
\frac{1}{2}|(-1)^ke^{-\mathrm{i}\eta}\tr_k\frac{\mu}{2} ((-1)^\ell
e^{-\mathrm{i}\zeta}\tr_\ell\frac{\nu}{2}\delta_{0,t}+\tr_{\ell+1}\frac{\nu}{2}\delta_{1,t})+
\]
\[
(-1)^s\tr_{k+1}\frac{\mu}{2}((-1)^\ell
e^{-\mathrm{i}\zeta}\tr_\ell\frac{\nu}{2}\delta_{0,t+1}+\tr_{\ell+1}\frac{\nu}{2}\delta_{1,t+1})|^2=
\]
\[
\frac{1}{2}|
(-1)^{k+\ell}e^{-\mathrm{i}\left(\eta+\zeta\right)}\tr_k\frac{\mu}{2}
\tr_\ell\frac{\nu}{2}\delta_{0,t}+
(-1)^ke^{-\mathrm{i}\eta}\tr_k\frac{\mu}{2}\tr_{\ell+1}\frac{\nu}{2}\delta_{1,t}+
\]
\[
(-1)^{s+\ell}e^{-\mathrm{i}\zeta}\tr_{k+1}\frac{\mu}{2}
\tr_\ell\frac{\nu}{2}\delta_{0,t+1}+
(-1)^s\tr_{k+1}\frac{\mu}{2}\tr_{\ell+1}\frac{\nu}{2}\delta_{1,t+1}|^2.
\]

\end{proof}

\subsection{The Probability Assignment of the Sample Space
$S(\psi;\mathcal{A},\mathcal{B})$}

\label{5.5}

Below we express the probabilities $p_{k,\ell}$,
$k,\ell\in\mathbb{Z}_2$, as functions in sums $\mu+(-1)^s\nu$ and
$\eta+(-1)^t\zeta$, $s,t\in\mathbb{Z}_2$, by using
Theorem~\ref{5.1.5}.

\begin{theorem}\label{5.5.1} Let $\psi=\psi_{s,t}$ for some
$s,t\in\mathbb{Z}_2$.

{\rm (i)}  One has
$\pr(\mathcal{A}_{\mu,\eta}=\lambda_k)=\pr(\mathcal{B}_{\nu,\zeta}=\lambda_\ell)=
\frac{1}{2}$ for any $k,\ell\in\mathbb{Z}_2$.

{\rm (ii)} One has
\[
p_{0,0}=p_{1,1}=
\]
\[
\frac{1}{2}\tr_t^2\frac{\mu+(-1)^s\nu}{2}+2(-1)^{s+t}\tr_t^2\frac{\eta+(-1)^t\zeta}{2}P\left(\frac{\mu}{2},
\frac{\nu}{2}\right)=
\]
\[
\frac{1}{2}\tr_t^2\frac{\mu+(-1)^{s+1}\nu}{2}+
2(-1)^{s+t+1}\tr_{t+1}^2\frac{\eta+(-1)^t\zeta}{2}P\left(\frac{\mu}{2},
\frac{\nu}{2}\right)
\]
for any $s,t\in\mathbb{Z}_2$.

{\rm (iii)} One has
\[
p_{0,1}=p_{1,0}=
\]
\[
\frac{1}{2}\tr_{t+1}^2\frac{\mu+(-1)^s\nu}{2}+
2(-1)^{s+t+1}\tr_t^2\frac{\eta+(-1)^t\zeta}{2}P\left(\frac{\mu}{2},
\frac{\nu}{2}\right)=
\]
\[
\frac{1}{2}\tr_{t+1}^2\frac{\mu+(-1)^{s+1}\nu}{2}+
2(-1)^{s+t}\tr_{t+1}^2\frac{\eta+(-1)^t\zeta}{2}P\left(\frac{\mu}{2},
\frac{\nu}{2}\right)
\]
for any $s,t\in\mathbb{Z}_2$.

\end{theorem}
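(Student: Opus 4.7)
The plan is to derive (ii) and (iii) directly from the formula in Theorem~\ref{5.1.5} by case analysis on $t\in\mathbb{Z}_2$, and then to deduce (i) from them via Proposition~\ref{5.1.1}. For each fixed $t$ the Kronecker deltas annihilate two of the four summands inside the modulus: when $t=0$ only the terms carrying $\delta_{0,t}$ and $\delta_{1,t+1}$ survive, and when $t=1$ only those carrying $\delta_{1,t}$ and $\delta_{0,t+1}$. In both cases what remains has the shape $\tfrac{1}{2}|\alpha+\beta|^2$ with $\alpha,\beta\in\mathbb{C}$ whose absolute values are products of two $\tr$-factors and whose phases differ by $\eta+(-1)^t\zeta$.

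I would then expand $|\alpha+\beta|^2=|\alpha|^2+|\beta|^2+2\re(\overline{\alpha}\beta)$. The cross term produces the factor $\cos\bigl(\eta+(-1)^t\zeta\bigr)$ multiplied by the sign $(-1)^{k+\ell+s}$ and by the product
\[
\tr_k\tfrac{\mu}{2}\tr_{k+1}\tfrac{\mu}{2}\tr_\ell\tfrac{\nu}{2}\tr_{\ell+1}\tfrac{\nu}{2}=\cos\tfrac{\mu}{2}\sin\tfrac{\mu}{2}\cos\tfrac{\nu}{2}\sin\tfrac{\nu}{2}=P\bigl(\tfrac{\mu}{2},\tfrac{\nu}{2}\bigr),
\]
which is independent of $k$ and $\ell$. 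Since the sign $(-1)^{k+\ell+s}$ and the squared-moduli part $|\alpha|^2+|\beta|^2$ are each invariant under the swap $(k,\ell)\mapsto(k+1,\ell+1)$, the identities $p_{0,0}=p_{1,1}$ and $p_{0,1}=p_{1,0}$ drop out at once.

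To obtain the two explicit closed forms displayed in (ii) and (iii), I apply addition formulas on both the $(\mu,\nu)$ side and the $(\eta,\zeta)$ side. The diagonal sum $\cos^2\tfrac{\mu}{2}\cos^2\tfrac{\nu}{2}+\sin^2\tfrac{\mu}{2}\sin^2\tfrac{\nu}{2}$ equals $\cos^2\tfrac{\mu-\nu}{2}-2P(\tfrac{\mu}{2},\tfrac{\nu}{2})=\cos^2\tfrac{\mu+\nu}{2}+2P(\tfrac{\mu}{2},\tfrac{\nu}{2})$, and the analogous two-fold rewriting holds for the off-diagonal sum that appears in (iii); simultaneously the half-angle formulas $\cos\phi=2\cos^2\tfrac{\phi}{2}-1=1-2\sin^2\tfrac{\phi}{2}$ convert the cross term into either $\tr_t^2\tfrac{\eta+(-1)^t\zeta}{2}$ or $\tr_{t+1}^2\tfrac{\eta+(-1)^t\zeta}{2}$. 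The constant $P$-correction generated by the half-angle substitution is exactly what the $\mp 2P$ from the first rewriting absorbs, and matching the two consistent choices produces the two equivalent right-hand sides in (ii) and (iii). The principal obstacle is purely combinatorial bookkeeping: choosing the correct pair of substitutions so that the $P$-coefficients produced by the two rewritings collapse into the single $P$-coefficient stated in the theorem, and tracking the parity shifts carried by $s$ and $t$ through the identities $\tr_t^2\tfrac{\mu+(-1)^s\nu}{2}\leftrightarrow\tr_t^2\tfrac{\mu+(-1)^{s+1}\nu}{2}$.

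Once (ii) and (iii) are in hand, Proposition~\ref{5.1.1} immediately yields $\pr(\mathcal{A}_{\mu,\eta}=\lambda_k)=\pr(\mathcal{B}_{\nu,\zeta}=\lambda_\ell)$ for all $k,\ell\in\mathbb{Z}_2$. Since the four probabilities $p_{k,\ell}$ must sum to $1$ and they arrange into two equal pairs, $\pr(\mathcal{A}_{\mu,\eta}=\lambda_k)=p_{k,0}+p_{k,1}=p_{0,0}+p_{0,1}$ is independent of $k$, and the analogous identity holds for $\mathcal{B}_{\nu,\zeta}$; hence each marginal equals $\tfrac12$, establishing (i).
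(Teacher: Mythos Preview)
Your proposal is correct and follows essentially the same approach as the paper: case analysis on $t$ starting from Theorem~\ref{5.1.5}, expansion of the squared modulus, recognition of the cross term as $(-1)^{k+\ell+s}\cos(\eta+(-1)^t\zeta)P(\tfrac{\mu}{2},\tfrac{\nu}{2})$, and the addition/half-angle rewritings to reach the two displayed closed forms, with the second form obtained via the same $P$-identity the paper records. The only cosmetic difference is ordering: the paper first reads off $p_{0,0}=p_{1,1}$ and $p_{0,1}=p_{1,0}$ directly from the conjugate-phase symmetry of the modulus expressions and invokes Proposition~\ref{5.1.1} to prove (i), then computes (ii) and (iii); you instead establish (ii) and (iii) first and deduce (i) from them, which is equivalent.
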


\begin{proof} {\rm (i)} In case $t=0$ Theorem~\ref{5.1.5} yields
\[
p_{0,0}=\frac{1}{2}|
e^{-\mathrm{i}\left(\eta+\zeta\right)}\tr_0\frac{\mu}{2}
\tr_0\frac{\nu}{2}+(-1)^s\tr_1\frac{\mu}{2}\tr_1\frac{\nu}{2}|^2,
\]
\[
p_{ 1,1}=\frac{1}{2}|
e^{\mathrm{i}\left(\eta+\zeta\right)}\tr_0\frac{\mu}{2}
\tr_0\frac{\nu}{2}+(-1)^s\tr_1\frac{\mu}{2}\tr_1\frac{\nu}{2}|^2,
\]
and hence $p_{0,0}=p_{1,1}$. Similarly, we obtain
\[
p_{0,1}=\frac{1}{2}|
e^{-\mathrm{i}\left(\eta+\zeta\right)}\tr_0\frac{\mu}{2}
\tr_1\frac{\nu}{2}+
(-1)^{s+1}\tr_1\frac{\mu}{2}\tr_0\frac{\nu}{2}|^2,
\]
\[
p_{1,0}=\frac{1}{2}|
e^{\mathrm{i}\left(\eta+\zeta\right)}\tr_0\frac{\mu}{2}
\tr_1\frac{\nu}{2}+
(-1)^{s+1}\tr_1\frac{\mu}{2}\tr_0\frac{\nu}{2}|^2,
\]
and therefore $p_{0,1}=p_{1,0}$.

In case $t=1$ Theorem~\ref{5.1.5} implies
\[
p_{0,0}=\frac{1}{2}|
e^{-\mathrm{i}\left(\eta-\zeta\right)}\tr_0\frac{\mu}{2}\tr_1\frac{\nu}{2}+
(-1)^s\tr_1\frac{\mu}{2}\tr_0\frac{\nu}{2}|^2,
\]
\[
p_{1,1}=\frac{1}{2}|
e^{\mathrm{i}\left(\eta-\zeta\right)}\tr_0\frac{\mu}{2}\tr_1\frac{\nu}{2}+
(-1)^s\tr_1\frac{\mu}{2}\tr_0\frac{\nu}{2}|^2,
\]
and therefore $p_{0,0}=p_{1,1}$. Similarly, we have
\[
p_{0,1}=\frac{1}{2}|
e^{-\mathrm{i}\left(\eta-\zeta\right)}\tr_0\frac{\mu}{2}\tr_0\frac{\nu}{2}+
(-1)^{s+1}\tr_1\frac{\mu}{2}\tr_1\frac{\nu}{2}|^2,
\]
\[
p_{1,0}=\frac{1}{2}|
e^{\mathrm{i}\left(\eta-\zeta\right)}\tr_0\frac{\mu}{2}\tr_0\frac{\nu}{2}+
(-1)^{s+1}\tr_1\frac{\mu}{2}\tr_1\frac{\nu}{2}|^2,
\]
so $p_{0,1}=p_{1,0}$.

In accord with Proposition~\ref{5.1.1}, for any
$k,\ell\in\mathbb{Z}_2$ we have
$\pr(\mathcal{A}_{\mu,\eta}=\lambda_k)=\pr(\mathcal{B}_{\nu,\zeta}=\lambda_\ell)$
and part {\rm (i)} is proved.

{\rm (ii)} In case $t=0$ we have
\[
p_{0,0}=\frac{1}{2}|
(\tr_0\left(\eta+\zeta\right)-\mathrm{i}\tr_1\left(\eta+\zeta\right))\tr_0\frac{\mu}{2}
\tr_0\frac{\nu}{2}+(-1)^s\tr_1\frac{\mu}{2}\tr_1\frac{\nu}{2}|^2=
\]
\[
\frac{1}{2}\tr_0^2\left(\eta+\zeta\right)\tr_0^2\frac{\mu}{2}
\tr_0^2\frac{\nu}{2}+(-1)^s\tr_0\left(\eta+\zeta\right)\tr_0\frac{\mu}{2}
\tr_0\frac{\nu}{2}\tr_1\frac{\mu}{2}\tr_1\frac{\nu}{2}+
\]
\[
\frac{1}{2}\tr_1^2\frac{\mu}{2}\tr_1^2\frac{\nu}{2}+\frac{1}{2}\tr_1^2\left(\eta+\zeta\right)\tr_0^2\frac{\mu}{2}
\tr_0^2\frac{\nu}{2}=
\]
\[
\frac{1}{2}(\tr_0^2\frac{\mu}{2}
\tr_0^2\frac{\nu}{2}+2(-1)^{s+1}\tr_0\frac{\mu}{2}
\tr_0\frac{\nu}{2}\tr_1\frac{\mu}{2}\tr_1\frac{\nu}{2}+\tr_1^2\frac{\mu}{2}\tr_1^2\frac{\nu}{2})+
\]
\[
(-1)^s(1+\tr_0\left(\eta+\zeta\right))\tr_0\frac{\mu}{2}
\tr_0\frac{\nu}{2}\tr_1\frac{\mu}{2}\tr_1\frac{\nu}{2}=
\]\[
\frac{1}{2}\tr_0^2\frac{\mu+(-1)^s\nu}{2}+2(-1)^s\tr_0^2\frac{\eta+\zeta}{2}P\left(\frac{\mu}{2},
\frac{\nu}{2}\right).
\]
In case $t=1$ we obtain
\[
p_{0,0}=\frac{1}{2}|
(\tr_0\left(\eta-\zeta\right)-\mathrm{i}\tr_1\left(\eta-\zeta\right))\tr_0\frac{\mu}{2}\tr_1\frac{\nu}{2}+
(-1)^s\tr_1\frac{\mu}{2}\tr_0\frac{\nu}{2}|^2=
\]
\[
\frac{1}{2}
\tr_0^2\left(\eta-\zeta\right)\tr_0^2\frac{\mu}{2}\tr_1^2\frac{\nu}{2}+
(-1)^s\tr_0\left(\eta-\zeta\right)\tr_0\frac{\mu}{2}\tr_1\frac{\nu}{2}\tr_1\frac{\mu}{2}\tr_0\frac{\nu}{2}+
\]
\[
\frac{1}{2}\tr_1^2\frac{\mu}{2}\tr_0^2\frac{\nu}{2}+
\frac{1}{2}\tr_1^2\left(\eta-\zeta\right)\tr_0^2\frac{\mu}{2}\tr_1^2\frac{\nu}{2}=
\]
\[
\frac{1}{2}\tr_0^2\frac{\mu}{2}\tr_1^2\frac{\nu}{2}+
(-1)^s\tr_0\frac{\mu}{2}\tr_1\frac{\nu}{2}\tr_1\frac{\mu}{2}\tr_0\frac{\nu}{2}+
\frac{1}{2}\tr_1^2\frac{\mu}{2}\tr_0^2\frac{\nu}{2}+
\]
\[
(-1)^{s+1}(1-\tr_0\left(\eta-\zeta\right))\tr_0\frac{\mu}{2}\tr_1\frac{\nu}{2}\tr_1\frac{\mu}{2}\tr_0\frac{\nu}{2}=
\]
\[
\frac{1}{2}\tr_1^2\frac{\mu+(-1)^s\nu}{2}+
2(-1)^{s+1}\tr_1^2\frac{\eta-\zeta}{2}P\left(\frac{\mu}{2},\frac{\nu}{2}\right).
\]
{\rm (iii)} When $t=0$ we have
\[
p_{0,1}=\frac{1}{2}|
(\tr_0\left(\eta+\zeta\right)-\mathrm{i}\tr_1\left(\eta+\zeta\right))\tr_0\frac{\mu}{2}
\tr_1\frac{\nu}{2}+
(-1)^{s+1}\tr_1\frac{\mu}{2}\tr_0\frac{\nu}{2}|^2=
\]
\[
\frac{1}{2}\tr_0^2\left(\eta+\zeta\right)\tr_0^2\frac{\mu}{2}
\tr_1^2\frac{\nu}{2}+(-1)^{s+1}\tr_0\left(\eta+\zeta\right)\tr_0\frac{\mu}{2}
\tr_1\frac{\nu}{2}\tr_1\frac{\mu}{2}\tr_0\frac{\nu}{2}+
\]
\[
\frac{1}{2}\tr_1^2\frac{\mu}{2}\tr_0^2\frac{\nu}{2}+\frac{1}{2}\tr_1^2\left(\eta+\zeta\right)\tr_0^2\frac{\mu}{2}
\tr_1^2\frac{\nu}{2}=
\]
\[
\frac{1}{2}\tr_0^2\frac{\mu}{2}
\tr_1^2\frac{\nu}{2}+(-1)^s\tr_0\frac{\mu}{2}
\tr_1\frac{\nu}{2}\tr_1\frac{\mu}{2}\tr_0\frac{\nu}{2}+\frac{1}{2}\tr_1^2\frac{\mu}{2}\tr_0^2\frac{\nu}{2}+
\]
\[
(-1)^{s+1}(1+\tr_0\left(\eta+\zeta\right))\tr_0\frac{\mu}{2}
\tr_1\frac{\nu}{2}\tr_1\frac{\mu}{2}\tr_0\frac{\nu}{2}=
\]
\[
\frac{1}{2}\tr_1^2\frac{\mu+(-1)^s\nu}{2}+
2(-1)^{s+1}\tr_0^2\frac{\eta+\zeta}{2}P\left(\frac{\mu}{2},
\frac{\nu}{2}\right).
\]
When $t=1$ we obtain
\[
p_{0,1}=\frac{1}{2}|
(\tr_0\left(\eta-\zeta\right)-\mathrm{i}\tr_1\left(\eta-\zeta\right))\tr_0\frac{\mu}{2}\tr_0\frac{\nu}{2}+
(-1)^{s+1}\tr_1\frac{\mu}{2} \tr_1\frac{\nu}{2}|^2=
\]
\[
\frac{1}{2}
\tr_0^2\left(\eta-\zeta\right)\tr_0^2\frac{\mu}{2}\tr_0^2\frac{\nu}{2}+
(-1)^{s+1}\tr_0\left(\eta-\zeta\right)\tr_0\frac{\mu}{2}\tr_0\frac{\nu}{2}\tr_1\frac{\mu}{2}\tr_1\frac{\nu}{2}+
\]
\[
\frac{1}{2}\tr_1^2\frac{\mu}{2}\tr_1^2\frac{\nu}{2}+
\frac{1}{2}\tr_1^2\left(\eta-\zeta\right)\tr_0^2\frac{\mu}{2}\tr_0^2\frac{\nu}{2}=
\]
\[
\frac{1}{2}\tr_0^2\frac{\mu}{2}\tr_0^2\frac{\nu}{2}+
(-1)^{s+1}\tr_0\frac{\mu}{2}\tr_0\frac{\nu}{2}\tr_1\frac{\mu}{2}\tr_1\frac{\nu}{2}+
\]
\[
\frac{1}{2}\tr_1^2\frac{\mu}{2}\tr_1^2\frac{\nu}{2}+
(-1)^s(1-\tr_0\left(\eta-\zeta\right))\tr_0\frac{\mu}{2}\tr_0\frac{\nu}{2}\tr_1\frac{\mu}{2}\tr_1\frac{\nu}{2}=
\]
\[
\frac{1}{2}\tr_0^2\frac{\mu+(-1)^s\nu}{2}+
2(-1)^s\tr_1^2\frac{\eta-\zeta}{2}P\left(\frac{\mu}{2},\frac{\nu}{2}\right).
\]
Thus, we proved the third equalities from parts {\rm (ii)} and
{\rm (iii)}. The fourth equalities can be obtained by the
trigonometric identity
\[
P\left(\frac{\mu}{2},\frac{\nu}{2}\right)=
\frac{1}{4}(-1)^{s+t+1}(\tr_t^2\frac{\mu+(-1)^s\nu}{2}-\tr_t^2\frac{\mu+(-1)^{s+1}\nu}{2}).
\]

\end{proof}

\section{The Noise}

\label{10}

Here we follow~\cite[Section 5]{[5]} and~\cite[Section 3]{[10]}
with $\alpha=\beta=\frac{1}{2}$, $A=(\mathcal{A}_{\mu,\eta}=1)$,
$B=(\mathcal{B}_{\nu,\zeta}=1)$.

The joint experiment of the binary trials
$\mathfrak{A}_{\mu,\eta}=
(\mathcal{A}_{\mu,\eta}=1)\cup(\mathcal{A}_{\mu,\eta}=-1)$ and
$\mathfrak{B}_{\nu,\zeta}=
(\mathcal{B}_{\nu,\zeta}=1)\cup(\mathcal{B}_{\nu,\zeta}=-1)$
(see~\cite[Part I, Section 6]{[15]}) produces the probability
distribution $\xi_1=p_{k,k},\xi_2=p_{k,k+1},
\xi_3=p_{k,k+1},\xi_4=p_{k,k}, k\in\mathbb{Z}_2$, (that is, the
probability assignment of the sample space
$S(\psi_{s,t};\mathcal{A},\mathcal{B})$), see Theorem~\ref{5.5.1}
). In turn, we obtain Boltzmann-Shannon entropy function
$E(\theta)=
-2\theta\ln\theta-2(\frac{1}{2}-\theta)\ln(\frac{1}{2}-\theta)$,
where $\theta=\xi_1=p_{k,k}$, and the corresponding degree of
dependence function $e(\theta)$. In accord with Section 3
of~\cite{[10]}, the two events $\mathcal{A}_{\mu,\eta}=\lambda_k$
and $\mathcal{B}_{\nu,\zeta}=\lambda_\ell$ are independent (that
is, $e(\theta)=0$), exactly when the corresponding binary trials
$\mathfrak{A}_{\mu,\eta}$ and $\mathfrak{B}_{\nu,\zeta}$ are
informationally independent. In this case we also say that the
above events are \emph{informationally independent}. Since their
probabilities are both $\frac{1}{2}$, we accomplish the equality
$\theta=\frac{1}{4}$ as an equivalent condition for independence.
In accord with Theorem~\ref{5.5.1}, if $\psi=\psi_{s,t}$ for some
$s,t\in\mathbb{Z}_2$, then the equation
\[
\frac{1}{2}\tr_t^2\frac{\mu+(-1)^s\nu}{2}+2(-1)^{s+t}\tr_t^2\frac{\eta+(-1)^t\zeta}{2}P\left(\frac{\mu}{2},
\frac{\nu}{2}\right)=\frac{1}{4},
\]
or, equivalently, the equation
\[
\frac{1}{2}\tr_t^2\frac{\mu+(-1)^{s+1}\nu}{2}+
2(-1)^{s+t+1}\tr_{t+1}^2\frac{\eta+(-1)^t\zeta}{2}P\left(\frac{\mu}{2},
\frac{\nu}{2}\right)=\frac{1}{4},
\]
is a necessary and sufficient condition for the events
$\mathcal{A}_{\mu,\eta}=\lambda_k$ and
$\mathcal{B}_{\nu,\zeta}=\lambda_\ell$ to be informationally
idependent.

\section{Special Types of Self-Adjoined Operators with Spectre $\{1,-1\}$}

\label{15}

In this Section we apply conditions of informational independence
from Section~\ref{10} in some particular cases where they have
simple form and explicit solutions.

\subsection{Operators with Direction\\ in Coordinate Plane $x=0$}

\label{15.1}

Here we consider self-adjoined operators of the form
\[
A_{\mu,\frac{\pi}{2}}=\left(
\begin{array}{ccccc}
\cos\mu &-\mathrm{i}\sin\mu \\
\mathrm{i}\sin\mu & -\cos\mu \\
 \end{array}
\right),
\]
where $\mu\in [0,\pi]$. In particular,
$A_{0,\frac{\pi}{2}}=\sigma_3$ and
$A_{\frac{\pi}{2},\frac{\pi}{2}}=\sigma_2$. The two events
$\mathcal{A}_{\mu,\eta}=\lambda_k$ and
$\mathcal{B}_{\nu,\zeta}=\lambda_\ell$ are informationally
independent if and only if
$\frac{1}{2}\tr_t^2\frac{\mu+(-1)^s\nu}{2}=\frac{1}{4}$.
Equivalently, $|\mu+(-1)^s\nu|=\frac{\pi}{2}$ or
$|\mu+(-1)^s\nu|=\frac{3\pi}{2}$. Hence the outcomes of
measurement of observables $\mathcal{A}_{\mu,\eta}$ and
$\mathcal{B}_{\nu,\zeta}$ are informationally independent
precisely in case $\mu+\nu=\frac{\pi}{2}$ or
$\mu+\nu=\frac{3\pi}{2}$ when $\psi=\psi_{0,t}$,
$t\in\mathbb{Z}_2$, and precisely in case
$|\mu-\nu|=\frac{\pi}{2}$ when $\psi=\psi_{1,t}$,
$t\in\mathbb{Z}_2$.

\subsection{Operators with Direction \\in Coordinate Plane $y=0$}

\label{15.5}

Now, we consider the self-adjoined operators  of the form
\[
A_{\mu,0}=\left(
\begin{array}{ccccc}
\cos\mu &\sin\mu \\
\sin\mu & -\cos\mu \\
 \end{array}
\right)
\]
where $\mu\in [0,\pi]$. We have $A_{0,0}=\sigma_3$,
$A_{\frac{\pi}{2},0}=\sigma_1$, and $A_{\frac{\pi}{4},0}=H$.

The two events $\mathcal{A}_{\mu,\eta}=\lambda_k$ and
$\mathcal{B}_{\nu,\zeta}=\lambda_\ell$ are informationally
independent if and only if
$\frac{1}{2}\tr_t^2\frac{\mu+(-1)^{s+t+1}\nu}{2}=\frac{1}{4}$.
Equivalently, $|\mu+(-1)^{s+t+1}\nu|=\frac{\pi}{2}$. Therefore the
results of measurement of observables $\mathcal{A}_{\mu,\eta}$ and
$\mathcal{B}_{\nu,\zeta}$ are informationally independent exactly
in case $\mu+\nu=\frac{\pi}{2}$ or $\mu+\nu=\frac{3\pi}{2}$ when
$\psi=\psi_{s,s+1}$, and exactly in case $|\mu-\nu|=\frac{\pi}{2}$
when $\psi=\psi_{s,s}$, $s\in\mathbb{Z}_2$.

\begin{remark}\label{15.5.1} {\rm The case $t=s=1$ is discussed also in
~\cite{[10]}. }
\end{remark}

\subsection{Operators with Direction \\ in Coordinate Plane $z=0$}

\label{15.10}

Finally, we consider self-adjoined operators  of the form
\[
A_{\frac{\pi}{2},\eta}=\left(
\begin{array}{ccccc}
0 &e^{-\mathrm{i}\eta} \\
e^{\mathrm{i}\eta} & 0 \\
 \end{array}
\right),
\]
where $\eta\in [0,2\pi]$. We have $A_{\frac{\pi}{2},0}=\sigma_1$
and $A_{\frac{\pi}{2},\frac{\pi}{2}}=\sigma_2$.

The two events $\mathcal{A}_{\mu,\eta}=\lambda_k$ and
$\mathcal{B}_{\nu,\zeta}=\lambda_\ell$ are informationally
independent if and only if
$\frac{1}{2}\tr_s^2\frac{\eta+(-1)^t\zeta}{2}=\frac{1}{4}$.
Equivalently, $|\eta+(-1)^t\zeta|=\frac{\pi}{2}$. Therefore, the
measurements performed by the observables $\mathcal{A}_{\mu,\eta}$
and $\mathcal{B}_{\nu,\zeta}$ are informationally independent
precisely in case
$\eta+\zeta\in\{\frac{\pi}{2},\frac{3\pi}{2},\frac{5\pi}{2},\frac{7\pi}{2}\}$,
when $\psi=\psi_{s,0}$, and precisely in case
$|\eta-\zeta|=\frac{\pi}{2}$ or $|\eta-\zeta|=\frac{3\pi}{2}$,
when $\psi=\psi_{s,1}$, $s\in\mathbb{Z}_2$.

\section*{Acknowledgements}

I would like to thank administration of the Institute of
Mathematics and Informatics at the Bulgarian Academy of Sciences
for creating safe working environment.

\end{document}